\newcommand{\problemtitle}[1]{\gdef\@problemtitle{#1}}
\newcommand{\probleminput}[1]{\gdef\@probleminput{#1}}
\newcommand{\problemquestion}[1]{\gdef\@problemquestion{#1}}
  \par\addvspace{.5\baselineskip}
  \par\addvspace{.5\baselineskip}
\newcommand{\pG}{\partial\Gamma}
\begin{document}

\mainmatter  


\title{b-continuity and Partial Grundy Coloring of graphs with large girth
\thanks{Partially supported by CNPq Projects Universal no. 401519/2016-3 and Produtividade no. 304576/2017-4, and by FUNCAP/CNPq project PRONEM no. PNE-0112-00061.01.00/16.}}
\titlerunning{b-continuity and Partial Grundy Coloring of graphs with large girth}
\authorrunning{Ibiapina and Silva}

\author{Allen Ibiapina\and Ana Silva}

\institute{ParGO Group - Paralelism, Graphs and Optimization\\Departamento de Matem\'atica\\Universidade Federal do Cear\'a\\
    Fortaleza, CE - Brazil\\
\url{allenr.roossim@gmail.com}; \url{anasilva@mat.ufc.br}}

\maketitle

\begin{abstract} 
  A b-coloring of a graph is a proper coloring such that each color class has at least one vertex which is adjacent to each other color class. The b-spectrum of $G$ is the set $S_{b}(G)$ of integers $k$ such that $G$ has a b-coloring with $k$ colors and $b(G)=\max S_{b}(G)$ is the b-chromatic number of $G$. A graph is b-continous if $S_{b}(G)=[\chi(G),b(G)]\cap \mathbb{Z}$. An infinite number of graphs that are not b-continuous is known. It is also known that graphs with girth at least 10 are b-continuous. 
  A partial Grundy coloring is a proper coloring $f:V(G)\rightarrow \{1,\ldots,k\}$ such that each color class $i$ contains some vertex $u$ that is adjacent to every color class $j$ such that $j<i$. The partial Grundy number of $G$ is the maximum value $\pG(G)$ for which $G$ has a partial Grundy coloring. 
  
  In this work, we prove that graphs with girth at least 8 are b-continuous, and that the b-spectrum of a graph $G$ with girth at least 7 contains the integers between $2\chi(G)$ and $b(G)$. We also prove that $\pG(G)$ equals a known upper bound when $G$ is a graph with girth at least~7. These results generalize previous ones by Linhares-Sales and Silva (2017), and by Shi et al.(2005).
\end{abstract}

\section{Introduction} 

Let $G$ be a simple graph (for basic terminology on graph theory, we refer the reader to~\cite{BM08}). A function $\psi \colon V(G)\to \mathbb{N}$ is a \emph{proper k-coloring of G} if $|\psi(V(G))| = k$ and $\psi(u)\neq \psi(v)$ whenever $uv \in E(G)$. Because we only deal with proper colorings in this text, from call them simply a coloring. We call the elements of $\psi(V(G))$ \emph{colors}. Given a color $i \in \psi(V(G))$, the set $\psi^{-1}(i)$ is called \emph{color class i}. The \emph{chromatic number of $G$} is the minimum value $\chi(G)$ for which $G$ admits a coloring. The problem of deciding~$\chi(G)\le k$ for given $G$ and $k$ is one of the most studied problems in Graph Theory. It is one of Karp's 21 $\NP$-complete problems~\cite{K.72}, and it remains $\NP$-complete even when restricted to many classes of graphs, as for instance, it is $\NP$-complete when $k=3$ and $G$ is a 4-regular planar graph~\cite{D.80}, or a line graph~\cite{H.81}. There are many variations of the coloring problem, and in this article we investigate two of these variations, both of them based on existing coloring heuristics.

Given a coloring $\psi$ of $G$, we say that $u \in V(G)$ is a \emph{b-vertex in $\psi$} if $u$ is adjacent to every color class different from $\psi(u)$, i.e., if $N(u)\cap \psi^{-1}(i)\neq \emptyset$ for every $i\neq \psi(u)$. Note that if some color class $c$ does not contain b-vertices, we can obtain a $(k-1)$-coloring by changing the color of each vertex $v \in \psi^{-1}(c)$ to another color different from $c$ where $v$ has no neighbors. We say that this new coloring is obtained from the first one by \emph{cleaning  color $c$}. If a coloring cannot be cleaned, it means that all color classes have at least one b-vertex. Such a coloring is called  a \emph{b-coloring of $G$}. Observe that an optimal coloring cannot have the number of colors decreased by the described algorithm; therefore every optimal coloring is also a b-coloring. In~\cite{Irving.Manlove.99}, the authors define the \emph{b-chromatic number of $G$}, denoted by $b(G)$, as the largest natural $k$ for which $G$ has a b-coloring with $k$ colors. In the same article, the authors demonstrated that the problem of finding  $b(G)$ is $\NP$-complete in general.

Another interesting aspect about b-colorings concerns its existence for every possible value between $\chi(G)$ and $b(G)$. In~\cite{Irving.Manlove.99}, the authors observe that the cube has a b-coloring using 2 colors and 4 colors, but has no b-coloring using 3 colors. Inspired by this result, in~\cite{KRATOCHVIL.etal.02} it is shown that for any integer $n\geq 4$ the graph obtained from the complete bipartite graph $K_{n,n}$ by deleting the edges from a perfect matching has a b-coloring using $2$ and $n$ colors, but has no b-coloring using a number of colors between $2$ and $n$. This motivates the definition of the \emph{b-spectrum of G}, that is the set $S_{b}(G)$ containing every integer $k$ such that $G$ has a b-coloring with $k$ colors. A graph $G$ is \emph{b-continous} if $S_{b}(G)= [\chi(G),b(G)] \cap \mathbb{Z}$. In~\cite{Barth.Cohen.Faik}, they prove that for each finite subset $S \subset \mathbb{N} - \{1\}$, there exists a graph $G$ such that $S_{b}(G)=S$, and also that deciding if a graph is b-continuous is $\NP$-complete even if colorings with $\chi(G)$ and $b(G)$ colors are given.

Now, given a b-coloring with $k$ colors, since each b-vertex has at least $k-1$ neighbors, there exists $k$ vertices with degree at least $k-1$ (this would be a subset of $k$ b-vertices of the $k$ colors). So if we define $m(G)$ as the largest positive integer $k$ such that there exist at least $k$ vertices with degree at least $m(G)-1$ in $G$, we have that $b(G)\leq m(G)$. 
This upper bound was introduced in~\cite{Irving.Manlove.99}, where the authors show that one can find $m(G)$ in polynomial time using the degree list of the graph. 
Also, they prove that if $G$ is a tree, then $b(G)\geq m(G)-1$, and that one can decide if $b(G)=m(G)$ in polynomial time. Their result was later generalized for graphs with girth at least 7~\cite{CLS.15} (the \emph{girth of $G$} is the minimum lenght of a cycle in $G$). We also mention that there are many results that say that regular graphs with large girth have high b-chromatic number~\cite{BMZ.09,CJ.11,EK.09,BMZ.09}. Indeed, the following conjecture is still open.

\begin{conjecture}[Blidia, Maffray and Zemir~\cite{BMZ.09}]
If $G$ is a $d$-regular graph with girth at least~5 and $G$ is not the Petersen graph, then $b(G)= d+1$.
\end{conjecture}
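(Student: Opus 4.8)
Since $G$ is $d$-regular, every vertex has degree $d$ and (as $G$ has at least $d+1$ vertices) we get $m(G)=d+1$; hence $b(G)\le m(G)=d+1$ by the bound recalled above, so the whole content of the statement is the lower bound $b(G)\ge d+1$. My plan is therefore to exhibit a b-coloring of $G$ with exactly $d+1$ colors. The key simplification is that in such a coloring a b-vertex $v$ of color $i$ has exactly $d$ neighbors that must realize the $d$ remaining colors, so $v$ must be \emph{rainbow}: its neighborhood receives each color of $\{1,\dots,d+1\}\setminus\{i\}$ exactly once. Thus it suffices to select a set $W=\{v_1,\dots,v_{d+1}\}$ of prospective b-vertices, color $v_i$ with $i$, and then color $N(W)=\bigcup_i N(v_i)$ so that every $v_i$ becomes rainbow; any proper partial coloring built this way extends to all of $G$ by a trivial greedy argument, since every uncolored vertex has at most $d$ colored neighbors and hence a free color among the $d+1$.

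First I would exploit the girth hypothesis to control the interaction between the chosen b-vertices. Girth at least $5$ forbids triangles and $4$-cycles, which yields three structural facts: each $N(v_i)$ is an independent set; two b-vertices have at most one common neighbor; and two \emph{adjacent} b-vertices have no common neighbor (so $W$ need not be independent, and adjacent b-vertices in fact supply each other's missing color). Consequently the only couplings between the local rainbow assignments come from (a) a single shared neighbor of a pair $v_i,v_j$, and (b) edges joining $N(v_i)$ to $N(v_j)$ along paths $v_i-u-w-v_j$. Both types of coupling are sparse, and the natural formulation is a constrained system of distinct representatives: for each $i$ we must choose a bijection $N(v_i)\to\{1,\dots,d+1\}\setminus\{i\}$, subject to agreement on shared vertices and properness across the few inter-neighborhood edges.

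The heart of the proof is then twofold: choosing $W$ well and solving this constrained assignment. If $G$ were large and sparse enough to contain $d+1$ vertices pairwise at distance at least $4$, the neighborhoods $N(v_i)$ would be pairwise disjoint and mutually non-adjacent, each local bijection could be chosen independently, and we would be done; I would first dispose of this easy regime. The difficulty is that the Moore bound only guarantees $n\ge d^2+1$, so such a well-separated packing need not exist---indeed for Moore graphs (diameter two, the Petersen graph among them) no two vertices are even at distance $3$. In the remaining dense regime I would try to select $W$ as a maximum distance-packing augmented by a few vertices whose neighborhoods overlap in a controlled way, and then complete the local bijections by a Hall-type / list-coloring argument, using that each neighbor is constrained by only $O(1)$ inter-neighborhood edges---a bound that improves as the girth grows, which is precisely why larger girth is easier and motivates the girth-$7$ and girth-$8$ results of this paper.

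The main obstacle, and the reason the statement is still a conjecture rather than a theorem, is guaranteeing that all $d+1$ rainbow b-vertices can be realized \emph{simultaneously} in the dense, highly symmetric case: there is simply not enough room to separate the candidate b-vertices, the local constraints become globally entangled, and one must prove that the only surviving obstruction is the Petersen graph. I expect that isolating Petersen as the unique exception requires a dedicated structural analysis of near-Moore girth-$5$ graphs rather than the clean packing-plus-extension scheme that settles the sparse case, and producing such an argument uniformly for all $d$ is the crux that remains open.
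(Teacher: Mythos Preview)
The paper does not prove this statement: it is quoted as an open conjecture of Blidia, Maffray and Zemir, with no proof offered. So there is nothing in the paper to compare your attempt against.

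Your proposal is also not a proof, and you essentially say so yourself in the last paragraph. The upper bound $b(G)\le m(G)=d+1$ is indeed immediate, and your reduction of the problem to finding $d+1$ vertices that can simultaneously be made \emph{rainbow} is the standard and correct reformulation. The girth-$5$ structural observations (each $N(v_i)$ independent, at most one common neighbor per pair, no common neighbor for adjacent pairs) are also correct. However, the substantive step --- solving the coupled system of bijections in the ``dense regime'' where no well-separated packing of $d+1$ vertices exists --- is precisely what you leave open. The appeal to a ``Hall-type / list-coloring argument, using that each neighbor is constrained by only $O(1)$ inter-neighborhood edges'' is not justified: at girth exactly~$5$ a vertex of $N(v_i)$ can be adjacent to one vertex in \emph{each} of the other neighborhoods $N(v_j)$, so the number of constraints per vertex is $\Theta(d)$, not $O(1)$, and a naive list argument does not go through. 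Your closing paragraph concedes that isolating the Petersen graph as the unique obstruction ``remains open'', which means the proposal is a description of the difficulty rather than a resolution of it.

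In short: the statement is a conjecture in the paper, and your write-up is an outline of why it is plausible and where it gets hard, not a proof.
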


Because of these results, it makes sense to investigate the b-continuity of graphs with large girth. Indeed, in~\cite{BK.12} the authors prove that regular graphs with girth at least~6 and without cycles of length~7 are b-continuous, and in~\cite{L-SS.17}, they prove that every graph with girth at least~10 are b-continuous. 
Here, we improve their result to graphs with girth at least~8.

\begin{theorem}\label{th:main} If $G$ is a graph with girth at least 8, then $G$ is b-continuous.
\end{theorem}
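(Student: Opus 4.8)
The plan is to prove the following \emph{descent} statement, from which Theorem~\ref{th:main} follows immediately: if $G$ has girth at least $8$ and admits a b-coloring with $k$ colors for some $k\ge \chi(G)+1$, then $G$ admits a b-coloring with $k-1$ colors. Indeed, any optimal coloring is a b-coloring with $\chi(G)$ colors, and by definition $G$ has a b-coloring with $b(G)$ colors; applying the descent repeatedly for $k=b(G),b(G)-1,\dots,\chi(G)+1$ fills in every intermediate value, giving $S_b(G)=[\chi(G),b(G)]\cap\mathbb{Z}$.

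To carry out one descent step, start from a b-coloring $\psi$ with $k$ colors and try to eliminate a single color $c$, recoloring every vertex of color $c$ with a color in $\{1,\dots,k\}\setminus\{c\}$. The first observation is that the color class $c$ is independent (girth at least $8>3$), so these recolorings never conflict with one another, and a vertex $w$ of color $c$ admits a free color in $\{1,\dots,k\}\setminus\{c\}$ exactly when $w$ is \emph{not} a b-vertex; hence the only genuine obstructions are the b-vertices of color $c$. The second observation is that, if we were only recoloring color-$c$ vertices, every b-vertex $v_i$ of a surviving color $i$ would remain a b-vertex automatically: its neighbourhood would merely lose the color $c$, while still meeting all of $\{1,\dots,k\}\setminus\{c,i\}$.

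The core of the argument is therefore the treatment of the b-vertices of color $c$. For such a b-vertex $u$ (adjacent to all $k-1$ other colors) we choose a target color $t(u)\ne c$, free it at $u$ by recoloring the neighbours of $u$ that currently have color $t(u)$, and then set $\psi(u)=t(u)$. Here the girth hypothesis is used structurally: since $G$ has no cycle of length at most $7$, the ball of radius $3$ around $u$ induces a tree, so $N(u)$ is independent, any two neighbours of $u$ have no second common neighbour, and the recoloring can be confined to this tree-like region. This locality is what makes simultaneous recoloring of the different b-vertices $u_1,\dots,u_m$ of color $c$ feasible: two of them are non-adjacent and, by girth at least $5$, share at most one neighbour, so their recolored regions are essentially disjoint.

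The main obstacle, and the step where the girth bound is used (and where $8$ already suffices), is to guarantee that freeing $t(u)$ at $u$ does not destroy a surviving b-vertex. Recoloring a neighbour $w$ of $u$ away from color $t(u)$ can kill a b-vertex $v_i$ of another color only if $w$ was the unique $t(u)$-neighbour of $v_i$, and it can also clash with the analogous operation performed for another b-vertex $u'$ of color $c$. The heart of the proof is to show that each such bad event forces $u$, $w$, and the endangered vertex together with its own realizing neighbours to lie on a common cycle of length at most $7$, which the girth hypothesis forbids. Since $u$ has $k-1\ge \chi(G)\ge 2$ admissible targets and every forbidden target is charged to such a short cycle, a safe choice of $t(u)$ survives for each $u$; performing all these local recolorings at once then yields a proper b-coloring in which color $c$ has been eliminated, i.e.\ one with $k-1$ colors. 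Making the cycle-counting precise, so that length $\le 7$ (rather than $\le 9$, as in the girth-$10$ argument of Linhares-Sales and Silva) already rules out every obstruction, is the delicate part of the proof.
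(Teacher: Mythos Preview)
Your proposal is a high-level plan rather than a proof, and its approach diverges from the paper's in the very place where the work has to be done.

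The paper does \emph{not} prove that one can always descend directly from a b-coloring with $k$ colors to one with $k-1$ colors by recoloring a single color class (and possibly a few neighbours). What it proves is a dichotomy (Lemma~\ref{lem:mainl}, valid already for girth $\ge 7$): either such a descent succeeds, or $G$ contains a $(k-1)$-iris, i.e.\ a vertex with $k-2$ neighbours each of degree at least $k-2$. The b-coloring with $k-1$ colors is then produced \emph{from the iris}, by quoting Lemma~\ref{lem:BK.12} of Balakrishnan and Kavaskar, whose proof builds the coloring afresh around the iris using girth $\ge 6$ and the absence of $7$-cycles; this external lemma is the only place where girth $8$ (rather than $7$) is used. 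So the structure is: ``descent fails $\Rightarrow$ iris $\Rightarrow$ separate construction'', not ``descent always works''.

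Your plan is precisely to show that descent always works, and the step you flag as ``delicate'' is the whole difficulty. The assertion that every obstruction to freeing a target colour $t(u)$ ``forces a cycle of length at most $7$'' is not substantiated and, as far as one can see, is not true in the form you state it: if $w\in N^{t}(u)$ is the unique $t$-neighbour of a b-vertex $v$, all you have is a path $u\text{--}w\text{--}v$ of length $2$; nothing closes it up. What the paper's analysis actually shows is that when no safe target exists at $u$, the neighbourhood $N[u]$ is forced to contain many high-degree vertices (b-vertices of the other colours, or their proxies at distance~$2$), and this yields a $(k-1)$-iris rooted at $u$, a purely tree-like configuration that exists happily in girth-$8$ graphs. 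Short cycles appear only after one \emph{also} assumes the iris is absent; they are the contradiction, not the generic obstruction. Your counting scheme (``$k-1$ targets, each forbidden one charged to a short cycle'') therefore has no footing: the forbidden targets are charged to neighbours of $u$ of high degree, not to cycles.

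There is a secondary gap as well: once you start recoloring neighbours of b-vertices of colour $c$, you may create new b-vertices of colour $c$ among vertices that were previously missing a colour, so the ``non-b-vertices of colour $c$ can be recolored freely'' step is no longer automatic. The paper sidesteps all of this by choosing $\psi$ to minimise $|B_1|$ and then $|\psi^{-1}(1)|$, which forces every non-b-vertex of colour~$1$ to be \emph{useful} (Claim~(i)); you have no analogue of this minimality device. In short, to turn your plan into a proof you would either have to reinvent the iris dichotomy and invoke Lemma~\ref{lem:BK.12}, or give a genuinely new argument showing direct descent always succeeds; the sketch as written does neither.
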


In addition, we prove that graphs with girth at least~7 are, in a way, almost b-continuous.

\begin{theorem} \label{th:second}
If $G$ is a graph with girth at least 7, then $[2\chi(G),b(G)]\cap \mathbb{Z} \subseteq S_{b}(G)$.
\end{theorem}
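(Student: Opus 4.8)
The plan is to prove Theorem~\ref{th:second} by establishing a single descending step and then inducting downward from $b(G)$. Concretely, I would prove the following lemma: if $G$ has girth at least $7$ and admits a b-coloring with $k$ colors for some $k>2\chi(G)$, then $G$ admits a b-coloring with $k-1$ colors. Since $G$ has a b-coloring with $b(G)$ colors by definition, repeatedly applying this lemma produces b-colorings with $k-1, k-2,\ldots$ colors, and the hypothesis $k>2\chi(G)$ keeps the step legal exactly until the color count reaches $2\chi(G)$; this fills the whole interval $[2\chi(G),b(G)]\cap\mathbb{Z}$ (the statement being vacuous when $b(G)<2\chi(G)$).

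For the lemma, fix a b-coloring $c$ with $k$ colors and choose a color $r$ to eliminate. The natural move is to recolor the class $c^{-1}(r)$ using the remaining $k-1$ colors. Note first that $c^{-1}(r)$ is an independent set, so its vertices may be recolored without conflicting with one another; a vertex $x\in c^{-1}(r)$ that is not a b-vertex sees at most $k-2$ colors and hence can be moved to some locally missing color. \textbf{The crux is the b-vertices of color $r$}: such a vertex $v$ is adjacent to every one of the other $k-1$ colors, so no color is immediately free for it. To recolor $v$ to a target color $i$, I would first recolor the neighbors of $v$ lying in $c^{-1}(i)$, opening color $i$ at $v$. This may cascade one further level, since a freed neighbor $x\in c^{-1}(i)$ could itself see many colors and need one of \emph{its} neighbors recolored before it has room.

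Here is where the two hypotheses enter, and I expect this to be the main obstacle. Girth at least $7$ is what makes the depth-one cascade safe: the neighbors of a single b-vertex $v$ are pairwise non-adjacent and share no common neighbor other than $v$ (girth $\ge 5$), so their local recolorings are independent; and two b-vertices of color $r$, which lie at distance $\ge 2$, cannot have their depth-one cascades collide without creating a cycle of length $\le 6$, which girth $\ge 7$ forbids. This confines all the recoloring to a forest-like neighborhood, so that afterward one can re-exhibit a b-vertex for every surviving color, the few disturbed witnesses being far apart and easily replaced using girth. The factor $2$ in $2\chi(G)$ should be the budget guaranteeing that each freed neighbor actually has a free color during this single cascade: roughly $\chi$ colors are needed to place the b-vertices of $r$ on a proper palette coordinated with a fixed optimal $\chi$-coloring $\phi$, and another $\chi$ colors of slack so that every recolored neighbor, which may see up to about $\chi$ further colors, still finds an available one. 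Thus $k>2\chi(G)$ is exactly the threshold at which a single level of cascade always succeeds under girth $7$; pushing below $2\chi(G)$ forces deeper cascades, which is precisely what the stronger girth-$8$ hypothesis of Theorem~\ref{th:main} buys.

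In order, the steps would be: (1) set up the descending lemma and the downward induction; (2) dispose of the non-b-vertices of $c^{-1}(r)$, which are immediate; (3) for each b-vertex $v$ of color $r$, select a target color and perform the depth-one freeing operation, using a counting argument backed by $k>2\chi(G)$ to guarantee the required free colors; (4) invoke girth $\ge 7$ to argue that the separate operations do not interfere and that the result is a proper coloring; and (5) verify that every color distinct from $r$ still carries a b-vertex, re-establishing witnesses locally where needed. The most delicate points, and the ones I would treat with the greatest care, are steps (3)--(4): proving that the freeing operation terminates after a single level and that the choices for the different b-vertices of $r$ can be made simultaneously without clashing.
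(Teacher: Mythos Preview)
Your outer induction (descend one color at a time from $b(G)$) matches the paper, but the implementation of the single descent step is both different from the paper's and, as written, has a real gap.

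The paper does \emph{not} try to recolor a whole color class directly. Instead it proves a dichotomy (Lemma~\ref{lem:mainl}, valid already for girth $\ge 7$ and any $k\ge\chi(G)+1$): from a b-coloring with $k$ colors one can either obtain a b-coloring with $k-1$ colors, or one can exhibit a $(k-1)$-\emph{iris} (a vertex with $k-2$ neighbors each of degree $\ge k-2$). Only in a separate lemma is the $(k-1)$-iris turned into a b-coloring with $k-1$ colors, and it is precisely there that the bound $k-1\ge 2\chi(G)$ is used: the tree around the iris is colored to create $k-1$ b-vertices, the ``boundary'' layer $B$ is colored with $\chi(G)$ fresh colors from $\{\chi(G)+1,\ldots,2\chi(G)\}$, and the rest with $\{1,\ldots,\chi(G)\}$. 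So in the paper the factor~$2$ is a layer-counting fact, not a cascade budget.

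Your direct-recoloring plan has two concrete problems. First, step~(2) is not ``immediate'': recoloring a non-b-vertex $x$ of color $r$ to some available color $i$ can destroy \emph{all} b-vertices of a third color $j$ (namely when every $w\in B_j$ had $x$ as its only neighbor of color $i$). The paper spends most of Lemma~\ref{lem:mainl} controlling exactly this phenomenon via the ``useful/useless'' and ``$j$-mutable'' machinery, and it is what forces the iris to appear when descent fails. Your step~(5) (``re-establish witnesses locally'') is where this difficulty actually lives, and nothing in the sketch addresses it. Second, your stated reason for the threshold $2\chi(G)$ (``$\chi$ colors to place the b-vertices of $r$, another $\chi$ of slack for each freed neighbor'') does not correspond to any counting argument I can extract from the outline; in particular, a neighbor of a b-vertex of color $r$ can have arbitrarily many colored neighbors, not ``about $\chi$'', so there is no reason a single level of cascade suffices. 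In the paper the descent lemma needs no such threshold at all; the $2\chi(G)$ enters only when you are forced to build a coloring from scratch around the iris.

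In short: the missing idea is the $(k-1)$-iris as the obstruction to direct descent, together with the from-scratch extension argument that consumes the $2\chi(G)$ hypothesis. Without that structural intermediary, steps~(3)--(5) of your plan do not go through.
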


Now, given a coloring $\psi:V(G)\rightarrow \{1,\ldots,k\}$, we say that $u\in V(G)$ is a \emph{Grundy vertex} if $u$ has a neighbor in color class $i$, for every color $i<f(u)$, and that $f$ is a \emph{partial Grundy coloring} if every color class contains at least one Grundy vertex. This concept was introduced in~\cite{EHLP.03}, where the authors relate the new concept with parsimonious colorings. We mention that a more largely studied concept is that of the Grundy colorings, where every vertex is a Grundy vertex. These are also called greedy colorings since they represent colorings that can be obtained by the greedy algorithm. In this context, a partial Grundy coloring can be thought of as a coloring that cannot have the number of colors decreased by moving vertices to smaller colors. Because a coloring with $\chi(G)$ colors cannot be improved like this, we get that it must be a partial Grundy coloring. 
The \emph{partial Grundy number} is then defined as the worst case scenario for this heuristis, i.e., it is the maximum number of colors in any partial Grundy coloring; it is denoted by $\pG(G)$.

Interestingly enough, these colorings are also easier on graphs with large girth. In~\cite{S.etal.05}, the authors introduce an upper bound for $\pG(G)$, called the \emph{stair factor of $G$} and denoted here by $s(G)$, and they prove that if $G$ has girth at least~9, then $\pG(G) = s(G)$. We improve their result in the theorem below. 

\begin{theorem} \label{th:partialGrundy}
If $G$ is a graph with girth at least 7, then $\pG(G)=s(G)$, and an optimal partial Grundy coloring can be obtained in polynomial time.
\end{theorem}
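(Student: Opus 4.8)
The plan is to prove the nontrivial inequality $\pG(G)\ge s(G)$ by an explicit construction, since $\pG(G)\le s(G)$ is exactly the bound of Shi et al.\ recalled above. Write $d_1\ge d_2\ge\cdots\ge d_n$ for the degree sequence of $G$ and set $k=s(G)$, so by definition of the stair factor $d_j\ge k-j$ for every $j\in\{1,\dots,k\}$. Choosing, for each $j$, a distinct vertex $w_{k-j+1}$ of degree $d_j$ produces pairwise distinct vertices $w_1,\dots,w_k$ with $\deg(w_i)\ge i-1$. I will call $w_i$ the \emph{witness} for colour $i$ and aim to produce a proper colouring $f\colon V(G)\to\{1,\dots,k\}$ in which $f(w_i)=i$ and $w_i$ has a neighbour in every colour class $c<i$. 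Any such $f$ is a partial Grundy colouring with $k$ colours, which gives the theorem.

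Since $s(G)\ge \pG(G)\ge\chi(G)$, the graph $G$ is $k$-colourable, so I would start from an arbitrary proper $k$-colouring and then modify it \emph{locally} around each witness so as to install the Grundy structure, recolouring only with colours already in $\{1,\dots,k\}$. To turn $w_i$ into a Grundy vertex for colour $i$ I set $f(w_i)=i$ and, for each colour $c<i$ not yet appearing in $N(w_i)$, recolour one suitably chosen neighbour of $w_i$ to $c$. Because $G$ has girth at least $7$ it is triangle-free, so $N(w_i)$ is independent and these local reassignments never clash among the neighbours of a single witness; the only way properness can fail is if a neighbour $x$ of $w_i$ that I recolour $c$ becomes adjacent to some other vertex already coloured $c$ outside $N(w_i)\cup\{w_i\}$. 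Keeping the full colouring proper at every stage means the palette stays inside $\{1,\dots,k\}$ and all $k$ colours remain realized by the witnesses, so no separate extension step is needed.

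The hard part will be carrying out all of these local modifications \emph{simultaneously}, because the witnesses of smallest surplus have degree exactly $i-1$: every one of their neighbours must be used and coloured bijectively with $1,\dots,i-1$, leaving no slack, and the recolourings around different witnesses are therefore coupled. This is where girth $7$ does the work, and where the improvement over the girth-$9$ bound of Shi et al.\ must be earned. Two girth-based observations bound the coupling: any two witnesses $w_i,w_{i'}$ share at most one common neighbour (two would close a $4$-cycle), and there is at most one edge joining $N(w_i)$ to $N(w_{i'})$ (two such edges $x_1x_1'$ and $x_2x_2'$ would yield the $6$-cycle $w_i\,x_1\,x_1'\,w_{i'}\,x_2'\,x_2\,w_i$, forbidden). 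Thus each witness can "freeze'' only a bounded number of $(\text{neighbour},\text{colour})$ incidences at any other witness. I would exploit this sparsity either by processing the witnesses in a carefully chosen order, so that when $w_i$ is treated the colours already frozen inside $N(w_i)$ are too few to block the remaining assignment, or by setting up for each witness a bipartite graph between its neighbours and the colours $\{1,\dots,i-1\}$ and verifying Hall's condition; girth $7$ is exactly what keeps the forbidden incidences sparse enough that a system of distinct representatives survives even in the zero-surplus case. I expect this conflict analysis to be the main obstacle of the whole argument.

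Finally, the construction is manifestly algorithmic: the stair factor $s(G)$ is read off the sorted degree list in polynomial time, the witnesses $w_1,\dots,w_k$ are selected directly from it, the initial proper $k$-colouring is obtained greedily (using $\chi(G)\le k$), and the local recolourings reduce to computing the matchings or systems of distinct representatives guaranteed by the girth argument, each of which is polynomial. Assembling these pieces yields both the equality $\pG(G)=s(G)$ and the claimed polynomial-time optimal partial Grundy colouring.
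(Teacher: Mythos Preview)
Your plan has two genuine gaps.

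First, selecting the witnesses purely by degree in $G$ is not enough: you need $w_i$ to have at least $i-1$ neighbours \emph{outside} $\{w_{i+1},\dots,w_k\}$, which is exactly what the feasible-sequence definition $d_{G_i}(w_i)\ge i-1$ guarantees and what the bare inequality $\deg_G(w_i)\ge i-1$ does not. For a concrete failure (with infinite girth), let $G$ be the forest consisting of a path $l_1\text{--}c\text{--}l_2$ together with a disjoint edge $e_1e_2$. Here $s(G)=3$ via the feasible sequence $(e_2,e_1,c)$, and your degree rule permits the choice $w_3=c$, $w_2=l_1$, $w_1=l_2$. Then $N(w_2)=\{c\}=\{w_3\}$, so $w_2$ has no neighbour available to receive colour~$1$, and no recolouring of non-witnesses rescues this. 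The paper avoids the issue by working directly with a feasible sequence $(w_1,\dots,w_s)$ and fixing, for each $i$, a set $N_i\subseteq N(w_i)\setminus\{w_{i+1},\dots,w_s\}$ of size $i-1$ to carry the colours below $i$.

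Second, starting from a full proper $k$-colouring and then recolouring neighbours of the $w_i$ is hard to control: when you push colour $c$ onto some $x\in N(w_i)$, nothing you have said prevents $x$ from already having a neighbour of colour $c$ elsewhere in $G$, and you give no mechanism for resolving such conflicts without cascading. The paper's proof sidesteps this entirely: it first builds a \emph{partial} colouring of $W\cup\bigcup_i N_i$ only, processing each $x\in\bigcup_i N_i$ in non-decreasing order of $m(x)=\min\{i:w_i\in N(x)\}$, and then extends greedily to the rest of $G$. The key step, which your proposal leaves as ``verify Hall's condition'' or ``choose an order'', is a clean pigeonhole: if every colour in $\{1,\dots,m(x)-1\}$ were blocked for $x$, each blocked colour $i$ would yield a distinct vertex $u_i\in\{w_2,\dots,w_{m(x)-1}\}$ at distance at most~$3$ from $x$ (distinctness is where girth~$\ge 7$ is used), forcing $m(x)-1$ elements into a set of size $m(x)-2$. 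Your sparsity observations about common neighbours and cross-edges are in the right spirit, but this ordering-plus-pigeonhole is the missing idea.
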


Observe that this and the result in~\cite{CLS.15} tell us that that if $G$ has girth at least~7, then both the b-chromatic number and the partial Grundy number of $G$ are high. However, we mention that these colorings are not similar when the continuity is concerned. In~\cite{BK.13}, the authors prove that every graph $G$ has a partial Grundy coloring with $k$ colors, for every $k\in\{\chi(G),\ldots,\pG(G)\}$. This is a common trait with Grundy colorings~\cite{CS.79}.

Our article is organized as follows. In Section~\ref{sec:proofBcol}, we prove Theorems~\ref{th:main} and~\ref{th:second}; in Section~\ref{sec:proofPartialG}, we prove Theorem~\ref{th:partialGrundy}; and in Section~\ref{sec:conclusion}, we give our closing remarks. The definitions and notation are introduced as they are needed.



\section{b-continuity}\label{sec:proofBcol}

In~\cite{BK.12}, a vertex $u\in V(G)$ is called a \emph{$k$-iris} if there exists $S \subseteq N(u)$ such that $|S| \geq k-1$ and $d(v)\geq k-1$ for every $v \in S$. See Figure~\ref{fig:kiris}. This definition is important because of the following important lemma.

\begin{figure}[thb]
\begin{center}
\begin{tikzpicture}
  \tikzstyle{level 1}=[sibling distance = 1cm]
  \tikzstyle{level 2}=[sibling distance = 0.5cm]
  [parent anchor= east, child anchor=west, grow=east]
  \tikzstyle{every node}=[draw, circle]
  \tikzstyle{edge from parent}=[draw]
  \node{$u$}
      	child {node {}
            child {node {}} child {node {}} }
	child {node {}
            child {node {}} child {node {}} }
	child {node {}
            child {node {}} child {node {}} };
    \end{tikzpicture}
   \caption{In the figure, we present a 4-iris.}
   \label{fig:kiris}
\end{center}
\end{figure}
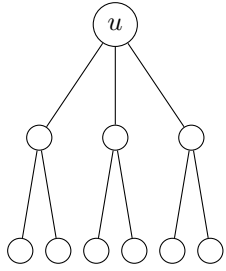

%
%
%
%
%
%
%

\begin{lemma}[\cite{BK.12}]\label{lem:BK.12}
Let $G$ be a graph with girth at least 6 and without cycles of lengh 7. If $G$ has a $k$-iris with	 $k \geq \chi(G)$, then $G$ has a b-coloring with $k$ colors.
\end{lemma}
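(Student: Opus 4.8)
The plan is to use the centre and the arms of the $k$-iris as the $k$ b-vertices of the colouring we build. Write $u$ for the centre and fix $S=\{v_1,\dots,v_{k-1}\}\subseteq N(u)$ with $d(v_i)\ge k-1$, and set $W_i=N(v_i)\setminus\{u\}$, so that $|W_i|\ge k-2$. The first step is to record the local structure forced by the girth hypothesis. Since $G$ has no triangle, $S$ and each $W_i$ are independent and no arm lies in any $W_j$; since $G$ has no $4$-cycle, the sets $W_i$ are pairwise disjoint and no leaf of $W_i$ is adjacent to an arm $v_j$ with $j\ne i$; and since $G$ has no $5$-cycle, there is no edge between $W_i$ and $W_j$ for $i\ne j$. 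Thus the gadget $\{u\}\cup S\cup\bigcup_i W_i$ is essentially a depth-two tree rooted at $u$.

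Next I would define the target colouring on the gadget: colour $u$ with $k$, colour each $v_i$ with $i$, and in each $W_i$ choose $k-2$ leaves to receive, bijectively, the colours $\{1,\dots,k-1\}\setminus\{i\}$ (possible because $|W_i|\ge k-2$). By the structure above this partial colouring is proper, and by construction $u$ is a b-vertex of colour $k$ (it sees $1,\dots,k-1$ through the arms) while each $v_i$ is a b-vertex of colour $i$ (it sees $k$ through $u$ and the remaining colours through its chosen leaves). Crucially, these b-vertex properties depend only on neighbourhoods containing the required colours, so they are preserved under any proper extension that does not recolour the gadget.

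It then remains to extend this partial colouring to a proper colouring of all of $G$ using only the colours $\{1,\dots,k\}$, and this is where $k\ge\chi(G)$ and the remaining cycle hypotheses enter; it is the main obstacle. I would start from a proper colouring of $G$ with colours in $\{1,\dots,k\}$ (available since $k\ge\chi(G)$, and the gadget already forces all $k$ colours to appear) and reconcile it with the gadget colouring by recolouring only the outside vertices that clash with it. The delicate case is an uncoloured vertex $x$ adjacent to several chosen leaves from distinct sets $W_i,W_j,\dots$: this is not excluded by girth $\ge 6$, since the closed walk $w_i\,x\,w_j\,v_j\,u\,v_i\,w_i$ is a $6$-cycle, which the hypothesis permits. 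For such vertices I would exploit the freedom in choosing which leaf of each $W_i$ realises which colour, together with the bound $k\ge\chi(G)$, to guarantee a free colour at $x$.

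The role I expect for the exclusion of $7$-cycles is to keep these troublesome configurations local and mutually independent: if two boundary vertices $x$ and $y$, with $x$ adjacent to a leaf of $W_i$ and $y$ adjacent to a leaf of $W_j$, were themselves adjacent, then $x\,w_i\,v_i\,u\,v_j\,w_j\,y\,x$ would be a $7$-cycle; forbidding $C_7$ therefore decouples the recolouring constraints at distinct boundary vertices and lets the conflicts be resolved one region at a time. Verifying that this decoupling genuinely permits the extension to be completed, i.e. that no vertex is ever forced to use a $(k+1)$-th colour, is the step I expect to demand the most care.
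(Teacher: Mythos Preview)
The paper does not prove this lemma; it is quoted from~\cite{BK.12} and used as a black box. So there is no in-paper argument to compare against, and I can only assess your sketch on its own terms.

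Your setup is sound: the girth~$\ge 6$ hypothesis does force the gadget $\{u\}\cup S\cup\bigcup_i W_i$ to induce a depth-two tree, the precolouring you describe is proper, and $u,v_1,\dots,v_{k-1}$ become b-vertices that survive any proper extension. Your observation that two vertices outside the gadget, each adjacent to a chosen leaf, cannot be adjacent (else a cycle of length~$3$, $5$, or~$7$) is also correct and is indeed where the $C_7$-freeness enters. Where the sketch stops short is the extension itself. First, not every boundary vertex is leaf-adjacent: an unchosen vertex $x\in N(v_i)\setminus T$ sees the arm $v_i$, and such an $x$ \emph{can} be adjacent to a leaf-adjacent boundary vertex $y$ (take $y\sim w_j\in W_j$ with $j\ne i$; then $v_i\,x\,y\,w_j\,v_j\,u\,v_i$ is a $6$-cycle, which is allowed). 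So the ``decoupling'' you describe is only partial. Second, even granting pairwise independence of the leaf-adjacent boundary vertices, a single such $x$ may be adjacent to one chosen leaf from each of many $W_i$'s, so its gadget neighbours can already occupy up to $k-1$ colours; combined with its neighbours in $G-T$ (which carry colours from the ambient $\chi(G)$-colouring you start from), there is no immediate free colour at~$x$. You gesture at ``the freedom in choosing which leaf of each $W_i$ realises which colour'', but that is a \emph{global} system of permutations, one per arm, that must simultaneously leave a free colour at every boundary vertex; showing such a choice exists is exactly the content of the lemma, and your sketch does not indicate how to make it. This is the step that \cite{BK.12} actually carries out, and it is not routine.
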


To prove Theorems~\ref{th:main} and~\ref{th:second}, given a b-coloring of $G$ with $k$ colors, $k>\chi(G)+1$, we try to obtain a b-coloring of $G$ with $k-1$ colors. However, this is not always possible, and when this happens, it is because we have a $k$-iris. Our theorem then follows from the lemma above. We mention that the constraint about not having cycles of length~7 appears only in the above lemma, but not on our proof. We now introduce the further needed definitions.

Let $G$ be a simple graph and $\psi$ be a b-coloring of $G$ with $k>\chi(G)+1$ colors. We say that $u$ \emph{realizes color $i$} if $\psi(u)=i$ and $u$ is a b-vertex. We also say that color $i$ is \emph{realized by $u$}. For $x \in V(G)$ and $i \in \{1,\ldots,k\}$, let $N^{\psi,i}(x)$ be the set of vertices of color $i$ in the neighborhood of $x$, i.e., $N^{\psi,i}(x) = N(x)\cap \psi^{-1}(i)$; in fact, we omit $\psi$ in the superscript since it is always clear from the context. This is also done in the next definitions. For a subset $X \subseteq V(G)$, let $N^{i}(X) = \bigcup_{x \in X} N^{i}(X)\setminus X$. Let $B(\psi)$ denote the set of b-vertices in $\psi$ and, for each $i\in \{1,\ldots,k\}$, let $B_{i} = B(\psi)\cap \psi^{-1}(i)$ be the set of b-vertices in color class $i$. 

Given a set $K$ such that $K\subseteq \psi^{-1}(i)$ for some $i \in \{1,\ldots,k\}$, we say that a color $j \in \{1,\ldots,k\}\setminus\{i\}$ is \emph{dependent on K} if $N^{i}(B_{j}) \subseteq K$; denote by $U(K)$ the set of colors depending on $K$. If $K=\{x\}$, we write simply $U(x)$. 
Given $x \in V(G)\setminus B(\psi)$, if $|U(x)|\geq 2$ we call $x$ a \emph{useful} vertex; otherwise, we say that $x$ is \emph{useless}. 
For $j\in \{1,\ldots,k\}$, we say that $x \in V(G)$ is \emph{j-mutable} if $x$ is useless and there exists a color $c$ such that we can change the color of $x$ to $c$ without creating any b-vertex of color $j$; we also say that color $c$ is \emph{safe for $(x,j)$}. If there is no safe color for $(x,j)$, we say that $x$ is $j$-imutable.

The next lemma is the main ingredient in our proof. Combined with Lemma~\ref{lem:BK.12}, it immediatly implies Theorem~\ref{th:main}.

\begin{lemma}\label{lem:mainl}
Let $G=(V,E)$ be a graph with girth at least 7. If $G$ has a b-coloring with $k$ colors where $k\geq \chi(G)+1$, then either $G$ has a b-coloring with $k-1$ colors, or $G$ contains a $(k-1)$-iris.
\end{lemma}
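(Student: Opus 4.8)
The plan is to choose one color to eliminate and to recolor its whole color class, turning $\psi$ into a proper $(k-1)$-coloring that is still a b-coloring. The observation that anchors the approach is that discarding a color is harmless for the \emph{other} classes: if every vertex of color $i$ is recolored to a color in $\{1,\dots,k\}\setminus\{i\}$, then any b-vertex $w$ of a color $j\neq i$ remains a b-vertex, because for each $c\in\{1,\dots,k\}\setminus\{i,j\}$ the color-$c$ neighbours of $w$ are never touched, so $w$ keeps seeing all $k-2$ colors it now needs. Moreover, since a color class is independent, once color $i$ contains no b-vertex it can be cleaned in a single simultaneous step. Hence the entire difficulty concentrates on the b-vertices of color $i$ which, being adjacent to all other colors, cannot be recolored directly.

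First I would reduce the task to emptying $B_i$. To kill a single $v\in B_i$ it is enough to make $v$ miss one color $c\neq i$, i.e.\ to recolor every vertex of $N^{c}(v)$ away from $c$. Two constraints govern such a move: it must not erase the last b-vertex of some other color, and it must not create a fresh b-vertex of color $i$ (which would undo the progress). These are precisely the two notions already set up: a vertex $x$ with $|U(x)|\le 1$ is \emph{useless}, so recoloring it endangers at most one other color (which can then be repaired), and a color is \emph{safe for} $(x,i)$ exactly when recoloring $x$ to it spawns no b-vertex of color $i$; the vertices we are allowed to move are thus the $i$-mutable ones. The core of the argument is a recoloring procedure that, for a suitably chosen $v$ and $c$, clears $N^{c}(v)$ using only $i$-mutable moves and repairs any color that collaterally loses its b-vertex, together with a potential function (for instance the number of b-vertices of color $i$, refined by $|\psi^{-1}(i)|$) that I would show strictly decreases, so that the procedure terminates with $B_i=\emptyset$.

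The heart of the proof is the blocked case: suppose that for the chosen $v\in B_i$ no color $c$ can be cleared, so for \emph{every} $c\neq i$ some $w_c\in N^{c}(v)$ is $i$-immutable. I would analyse the three reasons a vertex can be $i$-immutable: it is itself a b-vertex, it is useful, or no color is safe for it. In each case the girth bound forces the obstruction to be witnessed by an actual high-degree vertex. If $G$ has girth at least $7$ then the ball of radius $3$ around any vertex induces a tree, so the neighbours of $v$, the b-vertices of the various colors $j$, and the color-$i$ vertices that would become b-vertices are all genuinely distinct and cannot be conjured by overlaps; a short computation then shows that each $w_c$, or a vertex it points to, has degree at least $k-2$. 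Since the $k-1$ vertices $w_c$ are pairwise distinct and all lie in $N(v)$, assembling them (or the high-degree vertices they identify) yields a $(k-1)$-iris centred either at $v$ or at one of its neighbours.

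I expect this blocked case to be the crux. The delicate points are, first, ruling out --- using only the girth hypothesis --- every way in which the recolorings for different colors $c$ could interfere (one vertex blocking two colors at once, or a repair step re-creating a b-vertex of color $i$), and second, packaging the surviving high-degree vertices into the exact combinatorial shape of a $(k-1)$-iris, which may be centred at $v$ or at a blocking neighbour depending on which obstruction occurs. Guaranteeing that the collateral repairs do not make the procedure oscillate --- i.e.\ proving genuine termination rather than mere local progress --- is the second subtle point, and it is again the sparsity enforced by large girth that makes the accounting go through.
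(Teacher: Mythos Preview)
Your overall architecture matches the paper: pick a target color (the paper fixes color~1), choose a b-coloring minimizing $|B_1|$ and then $|\psi^{-1}(1)|$ to replace the termination argument, take any $u\in B_1$, and analyse the obstructions to clearing each $N^j(u)$. The divergence, and the gap, is in your handling of the blocked case. You propose to run the three obstruction types (b-vertex, useful, no safe color) in parallel and then ``assemble'' an iris from the high-degree witnesses they produce. But in the latter two cases the witness of high degree sits at distance~2 from $v$, attached to a \emph{different} neighbour $w_c$ for each colour $c$; there is no single centre at which these $k-1$ high-degree vertices are simultaneously neighbours, so your assembly step does not go through as stated. The claim that ``each $w_c$, or a vertex it points to, has degree at least $k-2$'' is true, but it does not yield an iris.

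The paper sidesteps this by first \emph{eliminating} the ``no safe color'' case altogether. Its Claim~2 is a counting argument: if some $w\in N^j(u)$ were $1$-immutable, then the near-b-vertex $v_j\in\psi^{-1}(1)$ it points to is useful by (i), hence contributes two colours to $U(v_j)$; summing these disjoint contributions together with the colours coming from (ii) and (iii) for the remaining $j$'s and using girth~$\ge 7$ to keep everything disjoint overcounts $\{2,\dots,k\}$. Once every $x\in N^j(u)\setminus B_j$ is $1$-mutable, only alternatives (ii) and (iii) remain, and both say that \emph{all} b-vertices of some colour lie in $N(u)\cup N(N(u))$. The iris is then centred at $u$ itself: if $N[u]$ is not already a $(k-1)$-iris, some colour falls under (iii), and chasing a colour-$1$ neighbour of the corresponding $B_d$ (which must again be useful) forces a cycle of length at most~6. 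So the missing idea in your sketch is this counting step that kills the immutable case; without it, the scattered high-degree witnesses cannot be collected into the required shape.
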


\begin{proof}
Our proof is similar to that made in \cite{L-SS.17}, but we concentrate in one color that we want to eliminate. 

Suppose that $G$ does not have a b-coloring with $k-1$ colors; we prove that $G$ has a $(k-1)$-iris. 
For this, let $\psi$ be a b-coloring with $k$ colors that minimizes $|B_{1}|$ and then minimizes $|\psi^{-1}(1)|$ (i.e., it firstly minimizes the number of b-vertices of color~1, then it minimizes the number of vertices of color~1). First, we prove that every $x \in \psi^{-1}(1)\setminus B_1$ is useful. Suppose otherwise and let $x$ be a useless vertex in color class~1, i.e., $|U(x)|\le 1$. If $U(x)= \emptyset$, then we can recolor $x$ without losing any b-vertex, a contradiction since $\psi$ minimizes $|\psi^{-1}(1)|$. And if $U(x)=\{d\}$, then we can obtain a b-coloring with $k-1$ by recoloring $x$ and cleaning $d$, again a contradiction. Therefore, the following holds:	

\begin{center}(i) Every $x\in \psi^{-1}(1)\setminus B_1$ is useful. \end{center}

Now, we choose any $u \in B_{1}$ and analyse its vicinity in order to obtain the desired $(k-1)$-iris. For this, the following two claims are essential.

\begin{claim}\label{claim1}
Let $j\in \{2,\ldots,k\}$. If every $x\in N^j(u)\setminus B_j$ is 1-mutable, then one of the following holds:
\begin{itemize}
	\item[(ii)] $N(u)\cap B_{j} \neq \emptyset$; or

	\item[(iii)] There exists a color $d \in \{2,\ldots,k\}\setminus \{j\}$ such that $d$ depends on $N^{j}(u)$, i.e., 
	\[N^j(B_d)\subseteq N^j(u).\]
\end{itemize}
\end{claim}
\emph{Proof of claim:}
Suppose that neither (ii) nor (iii) holds, and let $\psi'$ be obtained from $\psi$ by changing the color of each $x\in N^{j}(u)$ to a color $c$ safe for $(x,1)$. Because $(iii)$ does not hold, we get that $U(N^{j}(u))\subseteq \{1\}$. Therefore, at most one color loses all of its b-vertices, namely color~1, and since every $x\in N^j(u)$ is 1-mutable, no b-vertices of color~1 is created. But because $u$ is not a b-vertex in $\psi'$ (it is not adjacent to color $j$ anymore) and $\psi$ minimizes $|B_{1}|$, we get that $\psi'$ cannot be a b-coloring, which means that we can obtain a b-coloring with $k-1$ colors by cleaning color 1.$\diamondsuit$

\vspace{0.3cm}
The following claim tells us that (ii) or (iii) actually always hold.

\begin{claim}\label{claim2} (iv) Every $x \in N^{j}(u)\setminus B_j$ is 1-mutable, for every $j\in \{2,\ldots,k\}$.\end{claim}
\emph{Proof of claim:}
Suppose, without loss of generality, that $d\in\{2,\ldots,k\}$ is such that the colors in $\{d+1,\ldots,k\}$ are exactly the colors that contains some $1$-imutable vertex. We count the number of colors with b-vertices in the vicinity of $u$ to get that in fact $d\ge k$. So, for each $i\in \{d+1,\ldots,k\}$, let $w_i\in N^i(u)$ be a $1$-imutable vertex. By definition, this means that, for each $i\in \{d+1,\ldots,k\}$, there exists some neighbor of $w_i$ that would be turned into a b-vertex of color~1 in case we change the color of $w_i$; let $v_i$ be such a vertex. We then know that $v_i\in \psi^{-1}(1)\setminus B_1$, which by (i) gets us that $|U(v_{i})|\geq 2$. By the definition of $U(x)$ and the fact that every $x\in \{v_{d+1},\ldots,v_k\}$ is colored with color~1, we get:

\begin{equation}U(v_i)\cap U(v_\ell) = \emptyset\mbox{, for every }i,\ell\in\{d+1,\ldots,k\}, i\neq \ell.\label{eq1}\end{equation}

Now, we investigate the b-vertices around colors $\{2,\ldots,d\}$. By Claim~\ref{claim1}, suppose, without loss of generality, that $p\in \{2,\ldots,d\}$ is such that (ii) holds for colors  in $\{2,\ldots,p\}$, while (iii) holds for colors in $\{p+1,\ldots, d\}$. For each $i\in \{p+1,\ldots, d\}$, let $c_i\in\{2,\ldots,k\}\setminus \{i\}$ be a color depending on $N^i(u)$, which means that $B_{c_i}\subseteq N(N^i(u))$.  Observe that, since $G$ has no cycles of length~3, we get:

\begin{equation}\{2,\ldots,p\}\cap \{c_{p+1},\ldots,c_d\} = \emptyset\label{eq2}\end{equation}

Also, because $G$ has no cycles of length~4, we get $c_i\neq c_\ell$ for every $i\neq \ell$, i.e.:

\begin{equation}|\{c_{p+1},\ldots,c_d\}| = d-p\label{eq3}\end{equation}

Finally, because $G$ has no cycles of length smaller than~6, we get that:

\begin{equation}\{2,\ldots,p,c_{p+1},\ldots,c_d\} \cap \bigcup_{i={d+1}}^{k}U(v_i) = \emptyset.\label{eq4}\end{equation}

Now, recall that $\psi(v_i)=1$ for every $i\in\{d+1,\ldots,k\}$, and that $c_i\neq 1$ for every $i\in\{p+1,\ldots,d\}$. This means that $1\notin \{c_{p+1},\ldots,c_d\}\cup\bigcup_{i=d+1}^pU(v_i)$. By combining Equations~(\ref{eq1}) through~(\ref{eq4}), we get the following, which implies $d\ge k$ as desired:

\[\begin{array}{ll}
k-1  & \geq |\{2,\ldots,p\}\cup \{c_{p+1},\ldots,c_{d}\} \cup U(v_{d+1})\cup \ldots \cup U(v_{k})| \\
              & = d- 1+ \sum_{i=d+1}^{k}|U(v_{i})| \\
              & \geq d -1 + 2(k-d).
\end{array}\diamondsuit\]

Now, let $N = (N(u)\cup N(N(u)))\setminus \{u\}$. Observe that because (ii) or (iii) holds for every color $\ell\in \{2,\ldots,k\}$, we get that $B(\psi)\subseteq N$. Suppose that $N[u]$ does not contain a $(k-1)$-iris, otherwise the proof is done. This means that at least one color in $\{2,\ldots,k\}$, say $k$, is such that (ii) does not hold for $k$, which  by Claim~\ref{claim1} implies that (iii) holds, i.e., that there exists a color in $\{2,\ldots,k-1\}$, say 2, such that $N^2(B_k)\subseteq N^2(u)$ (Observe Figure~\ref{fig:theoremProof}). Now, let $w\in N^1(B_k)$; it exists since the vertices in $B_k$ are b-vertices. By (i), there exists at least two colors in $\{2,\ldots,k\}$ that depend on $w$. But because $B(\psi)\subseteq N$, we get a cycle of length at most~6, a contradiction.\hfill$\Box$

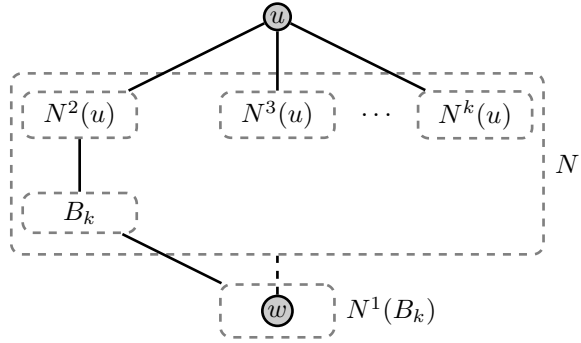
\begin{figure}[thb]
\begin{center}
  \begin{tikzpicture}[scale=1.3]
  \pgfsetlinewidth{1pt}
  \tikzset{vertex/.style={circle, minimum size=0.3cm, fill=black!20, draw, inner sep=1pt}}
  \tikzset{colorclass/.style ={draw=black!50, rounded corners, dashed, minimum height=5mm, minimum width=1.5cm} }

    \node [vertex] (u) at (0,0){$u$};
    \node [colorclass] (c2) at (-2,-1){$N^2(u)$};
    \node [colorclass] (Bk) at (-2,-2){$B_k$};
    
    \node [label=0:$N^1(B_k)$, draw=black!50, rounded corners, dashed, minimum height=7mm, minimum width=1.5cm] (N1Bk) at (0,-3){$$};
    \node [vertex] (w) at (0,-3) {$w$};
    
    \node [colorclass] (c3) at (0,-1){$N^3(u)$};
    \node [colorclass] (ck) at (2,-1){$N^k(u)$};
    \node at (1,-1) {$\ldots$};

    \node [label=0:$N$, draw=black!50, rounded corners, dashed, minimum height=24mm, minimum width=7cm] (N) at (0,-1.5){};
    
    \draw (u)--(c2) (u)--(c3) (u)--(ck) (c2)--(Bk)--(N1Bk);
    \draw[dashed] (w)--(N);

%

  \end{tikzpicture}
\caption{Structure around $u$ when color $k$ does not satisfy Claim~\ref{claim1}.(ii). }
\label{fig:theoremProof}
\end{center}
\end{figure}

\end{proof}

Now, to prove Theorem~\ref{th:second}, we apply Lemma~\ref{lem:mainl} and the next lemma. A \emph{star} is a tree that has at most one vertex with degree bigger than~1, and the \emph{diameter} of a graph $G$ is the maximum number of edges in a shortest path of $G$.

\begin{lemma}
If a graph $G$ has girth at least 7 and a $k$-iris where $k\geq 2\chi(G)$, then $G$ has a b-coloring with $k$ colors.
\end{lemma}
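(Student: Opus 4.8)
The plan is to use the $k$-iris to build the $k$ color classes directly, making its center and $k-1$ of its neighbours into the required b-vertices, and then to extend the resulting partial coloring to the rest of $G$. Let $u$ be the center and let $v_1,\dots,v_{k-1}\in N(u)$ be neighbours of degree at least $k-1$; put $W_i=N(v_i)\setminus\{u\}$, so that $N(v_i)=\{u\}\cup W_i$ and $|W_i|\ge k-2$. First I would record the structure of the closed second neighbourhood of $u$ forced by girth at least $7$ (equivalently, by the absence of cycles of length at most $6$): the $v_i$ are pairwise non-adjacent and no $v_i$ lies in any $W_j$ (no triangle); each $W_i$ is independent and $W_i\cap W_j=\emptyset$ for $i\neq j$ (no triangle, no $4$-cycle); there are no edges between $W_i$ and $W_j$ (no $5$-cycle); and, crucially, every vertex outside this structure is adjacent to at most one vertex of $\bigcup_i W_i$ in total, since a common neighbour of two $W_i$ and $W_j$ would close a $6$-cycle through $u$ and two neighbours inside one $W_i$ would close a $4$-cycle through $v_i$. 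A short case check also shows that a vertex adjacent to $u$ but outside $\{v_1,\dots,v_{k-1}\}$ has no neighbour in any $W_i$.

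Next I would define the partial coloring $\psi$ on the ``core'' $\{u\}\cup\{v_1,\dots,v_{k-1}\}\cup\bigcup_i W_i$: set $\psi(u)=k$, $\psi(v_i)=i$, and color $W_i$ by an arbitrary surjection onto $C_i:=\{1,\dots,k-1\}\setminus\{i\}$. This is possible since $|W_i|\ge k-2=|C_i|$, and legitimate because $W_i$ is independent and its only core-neighbour is $v_i$ (of color $i\notin C_i$). The structural facts make this partial coloring proper, and by construction $u$ is a b-vertex of color $k$ (it sees $1,\dots,k-1$ through the $v_i$) while each $v_i$ is a b-vertex of color $i$ (it sees $k$ through $u$ and all of $C_i$ through $W_i$). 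Thus every one of the $k$ colors already has a b-vertex; what remains is only to extend $\psi$ to a proper coloring of the rest of the graph without introducing new colors and without touching the already-colored vertices.

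The heart of the argument, and the place where the hypothesis $k\ge 2\chi(G)$ enters, is this extension. Let $R$ be the set of uncolored vertices; since $N(v_i)\subseteq$ core, no vertex of $R$ is adjacent to any $v_i$, and by the structural facts each $z\in R$ has \emph{at most one} already-colored neighbour, namely $u$ or a single vertex of $\bigcup_i W_i$. Hence $z$ has at most one forbidden color $f(z)$. I would take a proper coloring of the induced subgraph $G[R]$ with color classes $V_1,\dots,V_q$, where $q=\chi(G[R])\le\chi(G)$, and assign to the class $V_j$ the pair of colors $P_j=\{2j-1,2j\}$; these pairs lie inside $\{1,\dots,k\}$ precisely because $2q\le 2\chi(G)\le k$. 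Each $z\in V_j$ then receives a color of $P_j$ different from $f(z)$, which is possible since $|P_j|=2$ while at most one color is forbidden. Vertices of different classes use disjoint pairs and vertices of one class are non-adjacent, so the coloring of $R$ is proper, and avoiding $f(z)$ keeps it proper across the interface with the core. The resulting coloring of $G$ is proper, uses exactly the colors $1,\dots,k$, and still has $u,v_1,\dots,v_{k-1}$ as b-vertices, hence it is a b-coloring with $k$ colors.

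I expect the main obstacle to be isolating exactly which consequence of the girth bound drives the extension, namely that each vertex of $R$ sees at most one color from the core; once that is in hand, the hypothesis $2\chi(G)\le k$ resolves the extension cleanly through the pairing trick, so that no precoloring-extension or Hall-type difficulty arises at all. A secondary point to treat carefully is the bookkeeping of a few edge cases (the neighbours of $u$ outside the chosen $k-1$, and the possibility $k=2\chi(G)$, where the top pair $P_q$ may contain the color $k$), all of which are absorbed by the single ``avoid $f(z)$'' step.
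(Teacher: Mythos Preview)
Your proof is correct and follows essentially the same strategy as the paper: precolour the iris together with its second neighbourhood to create the $k$ b-vertices, use girth at least $7$ to deduce that every remaining vertex has at most one precoloured neighbour, and then exploit $k\ge 2\chi(G)$ to extend. The only (cosmetic) difference lies in the extension step: the paper partitions the uncoloured vertices according to whether their unique precoloured neighbour carries a colour in $\{1,\dots,\chi(G)\}$ or not and assigns the two parts the disjoint palettes $\{\chi(G)+1,\dots,2\chi(G)\}$ and $\{1,\dots,\chi(G)\}$, whereas your pairing trick (one pair $\{2j-1,2j\}$ per class of a $\chi(G)$-colouring of $R$) achieves the same effect.
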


\begin{proof}
Let $u \in V(G)$ be a $k$-iris with $k\geq 2\chi(G)$. Let $u_{2},\ldots,u_{k}$ be neighbors of $u$ such that $d(u_{i})\geq k-1$ for every $i \in\{2,\ldots,k\}$; let $N_i$ be a subset of $k-2$ neighbors of $u_i$ different from $u$.  Start by coloring $u$ with 1 and, for each $i\in \{2,\ldots,k\}$, give color $i$ to $u_i$ and colors $\{2,\ldots,k\}\setminus \{i\}$ to $N_i$. Denote by $P$ the set $\bigcup_{i=2}^kN[u_i]$ (set of colored vertices), and observe that the coloring can be easily done since $G[T]-u$ is a forest formed by $k-1$ stars. Also, note that we already have $k$ b-vertices of distinct colors, and thus it only remains to extend the partial coloring to the rest of the graph. For this, let $B$ be the set of vertices adjacent to some color class in $\{1,\ldots,\chi(G)\}$. See Figure~\ref{fig:kirisG7}.

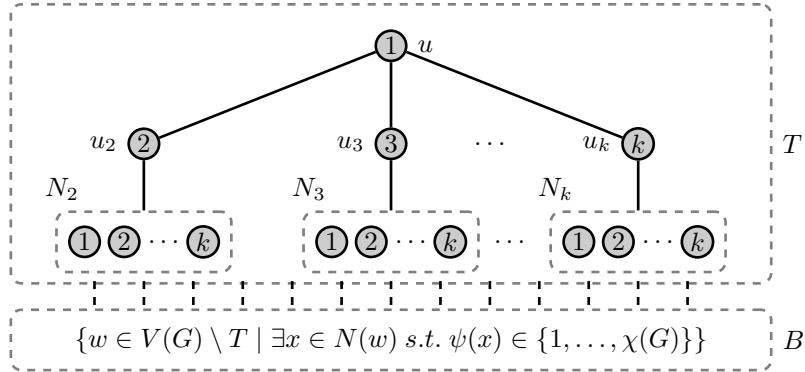
\begin{figure}[thb]
\begin{center}
  \begin{tikzpicture}[scale=1.3]
  \pgfsetlinewidth{1pt}
  \tikzset{vertex/.style={circle, minimum size=0.4cm, fill=black!20, draw, inner sep=1pt}}
  \tikzset{colorclass/.style ={draw=black!50, rounded corners, dashed, minimum height=8mm, minimum width=2.3cm} }

    \node [vertex,label=0:$u$] (u) at (0,0){$1$};
    \node [vertex,label=180:$u_2$] (u2) at (-2.5,-1){$2$};
    \node [vertex,label=180:$u_3$] (u3) at (0,-1){$3$};
        \node at (1,-1) {$\ldots$};
    \node [vertex,label=180:$u_k$] (uk) at (2.5,-1){$k$};
    
    \node [colorclass,label=150:$N_2$] (N2) at (-2.5,-2){$$};
    \node [colorclass,label=150:$N_3$] (N3) at (0,-2){$$};
        \node at (1.2,-2) {$\ldots$};
     \node [colorclass,label=150:$N_k$] (Nk) at (2.5,-2){$$};
     
     \node [vertex] at (-3.1,-2) {$1$};
     \node [vertex] at (-2.7,-2) {$2$};
     \node at (-2.3,-2) {$\ldots$};
     \node [vertex] at (-1.9,-2) {$k$};

     \node [vertex] at (-0.6,-2) {$1$};
     \node [vertex] at (-0.2,-2) {$2$};
     \node at (0.2,-2) {$\ldots$};
     \node [vertex] at (0.6,-2) {$k$};
     
     \node [vertex] at (1.9,-2) {$1$};
     \node [vertex] at (2.3,-2) {$2$};
     \node at (2.7,-2) {$\ldots$};
     \node [vertex] at (3.1,-2) {$k$};

    \node [label=0:$T$, draw=black!50, rounded corners, dashed, minimum height=37mm, minimum width=10cm] (N) at (0,-1){};
    
    \node [label=0:$B$, draw=black!50, rounded corners, dashed, minimum height=8mm, minimum width=10cm] (B) at (0,-3){$\{w\in V(G)\setminus T\mid \exists x\in N(w)\ s.t.\ \psi(x) \in\{1,\ldots,\chi(G)\}\}$};
    
    \draw (u)--(u2) (u)--(u3) (u)--(uk) (u2)--(N2) (u3)--(N3) (uk)--(Nk);
        \draw[dashed] (-2.5,-2.4)--(-2.5,-2.7) (-3,-2.4)--(-3,-2.7) (-2,-2.4)--(-2,-2.7) (-1.5,-2.4)--(-1.5,-2.7) (-1,-2.4)--(-1,-2.7) (-0.5,-2.4)--(-0.5,-2.7) (0,-2.4)--(0,-2.7) (2.5,-2.4)--(2.5,-2.7) (3,-2.4)--(3,-2.7) (2,-2.4)--(2,-2.7) (1.5,-2.4)--(1.5,-2.7) (1,-2.4)--(1,-2.7) (0.5,-2.4)--(0.5,-2.7);

%

  \end{tikzpicture}
\caption{Subset of vertices around $u$. The label inside a vertex denotes its color.}
\label{fig:kirisG7}
\end{center}
\end{figure}

We claim that $|N(w)\cap T|\le 1$ for every $w\in B$; indeed, because $T$ induces a tree of diameter~4,  if this was not true than we would get a cycle of length at most~6. By the definition of $B$, we then get that every $w\in B$ has no neighbors in color classes $\chi(G)+1,\ldots,k$. Thus, since $k\ge 2\chi(G)$, we can color $G[B]$ with colors $\chi(G)+1,\ldots,2\chi(G)$. Finally, by the definition of $B$, we know that every $w\in V(G)\setminus (B\cup T)$ has no neighbors of color $1$ through $\chi(G)$, which means that we can color $G-T-B$ with these colors.\hfill$\Box$
\end{proof}

\section{Partial Grundy Colorings}\label{sec:proofPartialG}

The next definitions were introduced in~\cite{S.etal.05}. Given $W = (w_1,\ldots,w_s)$, a sequence of vertices, we say that $W$ is a \emph{feasible sequence} if $d_{G_i}(w_i)\ge i-1$, for every $i\in \{1,\ldots,s\}$, where $G_i = G-\{w_{i+1},\ldots,w_s\}$. The \emph{stair factor of $G$} is the maximum size of a feasible sequence; we denote it here by $s(G)$. In~\cite{S.etal.05}, they prove that $s(G)$ is an upper bound for $\pG(G)$, and that if $W$ is a feasible sequence of $G$ of size $s$, and $G$ has girth at least~9, then $G$ has a partial Grundy coloring with $s$ colors. This implies that $\pG(G) = s(G)$ when $G$ has girth at least~9. We prove that it continues to hold when $G$ has girth at least~7 by also proving that the existence of a feasible sequence of size $s$ makes it possible to obtain a partial Grundy coloring of $G$ of size $s$. We mention that they also prove that computing a feasible sequence of maximum size can be done in polynomial time, which combined with our proof imply Theorem~\ref{th:partialGrundy}.

Now, consider a feasible sequence $(w_1,\ldots,w_s)$, $s\ge 2$, and for each $i\in \{2,\ldots,s\}$, let $N_i \subseteq N_{G_i}(w_i)$ be of cardinality $i-1$. Denote the set $\{w_2,\ldots,w_s\}$ by $W$. We say that a partial coloring $\psi$ of $G$ is \emph{non-redundant for $W$} if for every $i\in \{2,\ldots,s\}$, we get that $\psi(w_i) = i$ and all the colored vertices in $N_i$ have distinct colors in $\{1,\ldots,i-1\}$. 

\begin{proof}[of Theorem~\ref{th:partialGrundy}]
Let $(w_1,\ldots,w_s)$ be a feasible sequence. Denote by $W$ the set $\{w_2,\ldots,w_s\}$ and let $N_i$ be defined as before; also let $N = \bigcup_{i=2}^s N_i$. First, note that if $\psi$ is a non-redundant partial coloring that colors $W\cup N$, then a partial Grundy coloring with at least~$s$ colors can be obtained from $\psi$ in a greedy way: this is true because $w_i\in W$ is a Grundy vertex in $\psi^{-1}(i)$, for each $i\in \{2,\ldots,s\}$, and any vertex $x\in \psi^{-1}(1)$, which exists since $N_2$ is colored with~1, is a Grundy vertex. 

Start by coloring each $w_i\in W$ with $i$. For each $x\in N$, define:
\[m(x) = \min\{i\in\{2,\ldots,s\}\mid w_i\in N(x)\}.\]

Note that if $\psi$ is a non-redundant partial coloring and $x$ is colored in $\psi$, then $\psi(x)<m(x)$. We will then color $N$ in a non-decreasing order of $m(x)$. Suppose we are at the iteration of some $x\in N$, and observe that the colors that are forbidden for $x$ are the colors that appear in $N(x)$, and also the colors that appear in $N(w_i)$, where $w_i\in N(x)$. So, denote by $\ell$ the value $m(x)$ and by $F$ the set of vertices $N(x)\cup N(N_W(x))$. By what was said before, if there exists a color $c\in \{1,\ldots,\ell-1\}\setminus \psi(F)$, then we can give color $c$ to $x$ to obtain a non-redundant partial coloring that colors $x$, so suppose otherwise. For each $i\in \{1,\ldots,\ell-1\}$, let $y_i\in F$ be such that $\psi(y_i) = i$. First, note that, by the definition of $m(x)$ and the fact that $\psi(y_i)<m(x)$, we get that if $y_i\in N(x)$, then $y_i\in N$. Now, for each $i\in \{1,\ldots, \ell-1\}$, define: 

\[
u_i = \left\{\begin{array}{ll}
y_i & \mbox{, if $y_i\in W$}\\
w_{m(y_i)} & \mbox{, otherwise}
\end{array}\right.\]

 Let $W^{<\ell}$ be the set $\{w_2,\ldots,w_{\ell-1}\}$.  Observe Figure~\ref{fig:partialG}.

\begin{figure}[thb]
\begin{center}
  \begin{tikzpicture}[scale=1.3]
  \pgfsetlinewidth{1pt}
  \tikzset{vertex/.style={circle, minimum size=0.4cm, fill=black!20, draw, inner sep=1pt}}
  \tikzset{black/.style={circle, minimum size=0.4cm, fill=black, draw, inner sep=1pt}}
    \tikzset{u/.style={rectangle, minimum size=0.4cm, fill=black, draw, inner sep=1pt}}

    \node [vertex,label=150:$x$] (x) at (0,0){};
    \node [vertex,label=210:$y_{i_1}$] (yi1) at (-4,-1){}; 
    \node [u,label=270:$w_{m(y_{i_1})}$] (wi1) at (-4,-2){}; 
    \node [vertex,label=210:$y_{i_q}$] (yiq) at (-2,-1){};
    \node [u,label=270:$w_{m(y_{i_q})}$] (wiq) at (-2,-2){};
    \node at (-3,-1.5) {$\ldots$};
    
   \node [black] (t1) at (-1,-1){};     
    \node [u,label=270:$y_{i_{q+1}}$] (z1) at (-1,-2){}; 
    \node [black] (t2) at (1,-1){};    
   \node [u,label=270:$y_{i_r}$] (z2) at (1,-2){};
       \node at (-0,-1.5) {$\ldots$};

    \node [black] (wqp1) at (2,-1){}; 
    \node [vertex,label=-30:$y_{i_{r+1}}$] (yqp1) at (2,-2){}; 
    \node [u,label=270:$w_{m(y_{i_{r+1}})}$] (wmqp1) at (2,-3){}; 
    \node [black] (wl) at (4,-1){}; 
    \node [vertex,label=210:$y_{i_\ell}$] (yl) at (4,-2){}; 
    \node [u,label=270:$w_{m(y_{i_\ell})}$] (wml) at (4,-3){}; 
    \node at (3,-1.5) {$\ldots$};

   \draw (wi1)--(yi1)--(x)--(yiq)--(wiq) (wml)--(yl)--(wl)--(x)--(wqp1)--(yqp1)--(wmqp1) (z2)--(t2)--(x)--(t1)--(z1);
  \end{tikzpicture}
\caption{Subset of vertices in the vicinity of $x$. Black vertices are in $W$. Square vertices are in $\{u_1,\ldots,u_\ell\}$. Observe that no two of them are equal since $G$ has girth at least~7.}
\label{fig:partialG}
\end{center}
\end{figure}
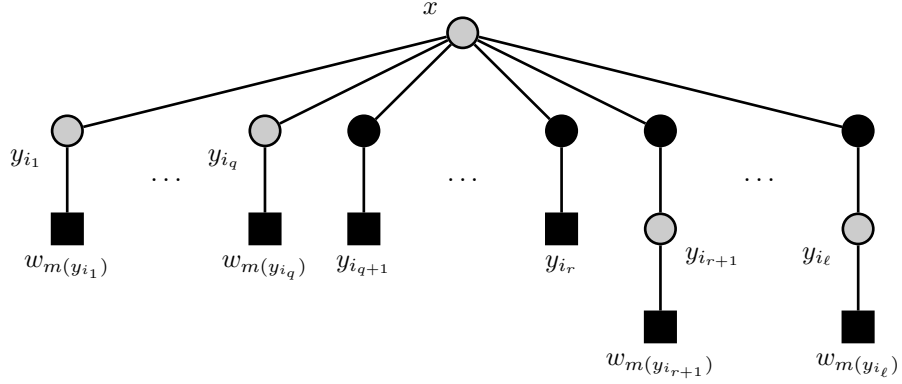

 Consider $i\in \{1,\ldots,\ell-1\}$. If $u_i=y_i$, then, because $\psi(y_i)<m(x)$, we get $u_i\in W^{<\ell}$.  Otherwise, we get that $y_i\in N$ and, by the chosen order of coloring, we get that $m(y_i) < m(x)$, which again implies that $u_i\in W^{<\ell}$. We then get $\{u_1,\ldots,u_{\ell-1}\}\subseteq W^{<\ell} = \{w_2,\ldots,w_{\ell-1}\}$, which by the pigeonhole principle implies that there exist $i,j\in \{1,\ldots,\ell-1\}$ with $i\neq j$ such that $u_i=u_j$. But because $u_i$ and $u_j$ are at distance at most~3 from $x$, we get a cycle of length smaller than~7, a contradiction.\hfill$\Box$
\end{proof}

\section{Conclusion}\label{sec:conclusion}

We have proved that every graph with girth at least~8 is b-continuous, and that graphs with girth at least~7 are in a way almost b-continuous. This improves the result presented in~\cite{L-SS.17}, where they prove that graphs with girth at least~10 are b-continuous. There, the authors also pose the following questions:

\begin{question}\label{q:general}
What is the minimum $\hat{g}$ such that $G$ is b-continuous whenever $G$ is a graph with girth at least $\hat{g}$?
\end{question}

\begin{question}\label{q:bipartite}
What is the minimum $\tilde{g}$ such that $G$ is b-continuous whenever $G$ is a bipartite graph with girth at least $\tilde{g}$?
\end{question}

Recall that the graph obtained from the complete bipartite graph $K_{n,n}$ by removing a perfect matching is not b-continuous, for every $n\ge 4$~\cite{KRATOCHVIL.etal.02}. Hence, by our result we get:

\[5\le \hat{g}\le 8.\]

And because bipartite graphs have no odd cycles, we get:
\[\tilde{g}\in\{6,8\}.\]

We believe that the same techniques might improve this bound to~7, but not further. In particular, we mention that Lemma~\ref{lem:mainl} works for graphs with girth~7, and that Theorem~\ref{th:main} is restricted to girth at least~8 just because of Lemma~\ref{lem:BK.12}. Therefore, if the following question is answered ``yes'', then we get $\hat{g}\le 7$.

\begin{question}
Let $G$ be a graph with girth at least~7 such that $G$ has a $k$-iris, with $k \geq \chi(G)+1$. Does $G$ admit a b-coloring with $k$ colors?
\end{question}

We also mention that the coloring problem is $\NP$-complete for graphs with girth at least $k$, for every fixed $k\geq 3$~\cite{LK.07}. This is why any proof of a result like Theorem~\ref{th:main} is expected to have a non-constructive part. 

As for the case of bipartite graphs, we think it is worth mentioning a known conjecture about their b-chromatic number. Recall the upper bound $m(G)$ for the b-chromatic number $b(G)$, which is the maximum value $k$ for which there exist $k$ vertices with degree at least $k-1$. The set of all vertices with degree at least $m(G)-1$ is denoted by $D(G)$, and a graph is said to be \emph{tight} if $|D(G)| = m(G)$; this means that there is only one candidate set for the set of b-vertices of a b-coloring of $G$ with $m(G)$ colors. Deciding if $b(G)=m(G)$ is $\NP$-complete even for bipartite tight graphs~\cite{KRATOCHVIL.etal.02}. In~\cite{HLS.12}, the authors define the class $\mathcal{B}_m$ that contains every bipartite graph $G = (A\cup B, E)$ such that $m(G)=m$, $D(G) = A$ and $G$ has girth at least~6. They conjecture the following:

\begin{conjecture}[Havet, Linhares-Sales and Sampaio~\cite{HLS.12}\label{ConjHSL.12}]
For every $m\ge 3$, and every $G\in \mathcal{B}_m$, we have that: \[b(G)\ge m(G)-1.\]
\end{conjecture}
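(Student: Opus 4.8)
The plan is to exhibit a b-coloring of $G$ with exactly $m-1$ colors, so that together with the known bound $b(G)\le m(G)$ this pins $b(G)$ to $\{m-1,m\}$. Write $A=D(G)=\{a_1,\dots,a_m\}$; by hypothesis $A$ is one side of the bipartition, every $a_i$ has degree at least $m-1$, and every vertex of $B$ has degree at most $m-2$. A first observation is that the $k$-iris machinery of Lemma~\ref{lem:BK.12} does \emph{not} apply here: the $m$ candidate b-vertices do not lie in a single neighborhood but form the independent set $A$ spread over the low-degree side $B$, so a covering-type argument is needed instead. I would select $m-1$ of them, say $a_1,\dots,a_{m-1}$, as the prospective b-vertices and set $\psi(a_i)=i$; to sidestep any difficulty in properly coloring the discarded vertex $a_m$ (whose neighborhood could otherwise exhaust all $m-1$ colors), I color it $\psi(a_m)=1$, which is legitimate since $A$ is independent and only helps the retained b-vertices.

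Because $G$ is bipartite, $B$ is independent, so a coloring of $B$ carries no internal constraints: each $v\in B$ must merely avoid the set $L(v)=\{\psi(a):a\in N(v)\cap A\}$ of colors of its $A$-neighbors, and $|L(v)|\le \deg(v)\le m-2$, leaving at least one admissible color for $v$. Hence the whole problem reduces to a single covering task: color a subset of $B$, each $v$ with a color of $[m-1]\setminus L(v)$, so that for every $i\in\{1,\dots,m-1\}$ the neighborhood $N(a_i)$ realizes all of $[m-1]\setminus\{i\}$. Once this is done, each $a_i$ is a b-vertex; the remaining vertices of $B$ are colored greedily (each has at most $m-2$ forbidden colors, and adding colors to a neighborhood never destroys a b-vertex), completing a b-coloring with $m-1$ colors.

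For the covering task I would exploit the two structural consequences of girth at least $6$: $A$ is independent, and the absence of $C_4$ forces any two vertices $a_i,a_j$ to share at most one common neighbor, so the neighborhoods $N(a_i)$ are almost disjoint. I would model the demands as the pairs $(a_i,c)$ with $c\in[m-1]\setminus\{i\}$ and try to route each demand to a distinct neighbor of $a_i$ able to carry color $c$, while allowing a neighbor shared by $a_i$ and $a_j$ to serve the same color $c\notin\{i,j\}$ to both. The slack is twofold: each $a_i$ has at least $m-1$ neighbors but needs only $m-2$ colors (one spare neighbor apiece), and discarding $a_m$ turns the neighbors it shared with the retained b-vertices into private neighbors. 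The goal is to show that this slack, combined with the near-disjointness of neighborhoods, makes the associated defect form of Hall's condition hold, so that a system of distinct representatives — equivalently a feasible integral flow — yields a consistent coloring of $B$ realizing all $m-1$ b-vertices.

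The main obstacle is exactly the verification of Hall's condition using only girth at least $6$. Girth $6$ permits $6$-cycles $a_i\,v\,a_j\,w\,a_k\,x\,a_i$ among the candidate b-vertices, and such a cycle creates a chain in which the color forced on one shared neighbor propagates to constrain another, so a subset of demands may reach too few admissible neighbors; a girth-$7$ hypothesis would forbid precisely these chains and let unique-common-neighbor counting close the argument. I therefore expect the honest outcome of this plan to be a proof of $b(G)\ge m(G)-1$ under girth at least $7$, in line with Theorems~\ref{th:main}--\ref{th:partialGrundy}, with the girth-$6$ case — the full conjecture — hinging on proving that the degree slack and the freed $a_m$-neighbors always overcome these $C_6$ interactions. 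Quantifying that, and ruling out a deficient demand set, is where I expect the real work to lie, and is plausibly why the conjecture has remained open.
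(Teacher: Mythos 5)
This statement is Conjecture~\ref{ConjHSL.12} of the paper: an \emph{open} conjecture of Havet, Linhares-Sales and Sampaio~\cite{HLS.12}, which the paper states without proof (it only remarks, citing~\cite{LC.13}, that it would follow from the Erd\H{o}s--Faber--Lov\'asz conjecture). So there is no proof in the paper to compare you against, and your proposal does not supply one either: by your own admission, the argument stalls exactly at verifying the (defect) Hall condition when girth~6 is all you have, because $6$-cycles $a_i v a_j w a_k x a_i$ let a color forced on one shared neighbor propagate and defeat the counting. That girth-$6$ interaction is precisely the open content of the statement, so what you have written is a research plan that rediscovers why the conjecture is hard, not a proof. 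Moreover, the fallback you land on --- $b(G)\ge m(G)-1$ under girth at least~$7$ --- is not new: it is the result of~\cite{CLS.15} already quoted in the paper's introduction as the generalization of Irving and Manlove's theorem on trees, so the honest outcome of your plan reproves a known theorem.

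Two concrete issues in the sketch itself. First, you set $A=D(G)=\{a_1,\dots,a_m\}$, but membership in $\mathcal{B}_m$ does not include tightness: $D(G)$ is the set of vertices of degree at least $m(G)-1$ and satisfies only $|D(G)|\ge m(G)$, so $A$ may be strictly larger than $m$. This is repairable (since $A$ is independent you may color the surplus $A$-vertices with color~$1$ alongside $a_m$ before touching $B$), but as written the setup assumes more than the hypothesis gives. Second, the load-bearing claim --- that the slack from one spare neighbor per $a_i$, the neighbors freed by discarding $a_m$, and the near-disjointness of the neighborhoods ($C_4$-freeness) together force Hall's condition for the demand system --- is asserted, never verified; and your own analysis of $C_6$-chains shows it can fail as stated, which is why you retreat to girth~$7$. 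Quantifying whether the slack always beats the $C_6$ interactions is exactly the open problem; note the paper's concluding section suggests the genuinely hard direction is the reverse reduction, namely that the girth-$6$ bipartite case is intertwined with b-continuity of bipartite graphs (Question~\ref{q:bipartite}) and with Erd\H{o}s--Faber--Lov\'asz, so a purely local SDR/flow argument of the kind you outline is unlikely to close it without a substantially new idea.
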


We mention that, if $G$ is a bipartite graph with girth at least~6 and a b-coloring of $G$ with $k$ colors is given, with $k\ge \chi(G)+1$, then, with a little more work in the proof of Lemma~\ref{lem:mainl}, one can get that $G$ contains an induced subgraph $H$ that has a structure similar to the structure of a graph in $\mathcal{B}_k$. Trying to use this structure to obtain a b-coloring of $H$ with $k-1$ colors could translate into proving Conjecture~\ref{ConjHSL.12}. And on the other way around, we believe that a strategy to prove Conjecture~\ref{ConjHSL.12} could help coloring these graphs, which would imply that the answer to Question~\ref{q:bipartite} is ``yes''. This means that answering Question~\ref{q:bipartite} seems as hard as proving Conjecture~\ref{ConjHSL.12}. We also mention that in~\cite{LC.13}, it is proved that Conjecture~\ref{ConjHSL.12} is a consequence of the famous Erd\H{o}s-Faber-Lov\'asz Conjecture, which remains open since~1972 and which is largely believed to hold. This is strong evidence that Conjecture~\ref{ConjHSL.12} holds.

%

Now, concerning partial Grundy colorings, an obvious question also concerns the minimum girth for which $\pG(G)=s(G)$ holds. 

\begin{question}\label{q:general}
What is the minimum $\dot{g}$ such that $\pG(G)=s(G)$ whenever $G$ is a graph with girth at least $\dot{g}$?
\end{question}

Oberve that if $G=(A\cup B,E)$ is the complete bipartite graph $K_{n,n}$, then $s(G) = n+1$: a feasible sequence is obtained by starting with any vertex in $A$, then ordering each $v\in B$. However, we argument that the only existing partial Grundy coloring of $G$ uses 2 colors. For this, suppose without loss of generality that color~1 appears in $A$. Then no vertex in $B$ is colored with 1, which means that no Grundy vertex of color bigger than~1 can appear in $A$. But now, if color $i>2$ has a Grundy vertex $u$ in $B$, then color~2 must appear in $A$ and, therefore, cannot appear in $B$. We get a contradiction since in this case color~2 has no Grundy vertices. We then get the following bounds:
\[5\le \dot{g}\le 7.\]

\bibliographystyle{plain}
\bibliography{refs}

\begin{thebibliography}{10}

\bibitem{BK.12}
R.~Balakrishnan and T.~Kavaskar.
\newblock b-coloring of kneser graphs.
\newblock {\em Discrete Appl. Math.}, 160:9--14, 2012.

\bibitem{BK.13}
R.~Balakrishnan and T.~Kavaskar.
\newblock Interpolation theorem for partial grundy coloring.
\newblock {\em Discrete Math.}, 313(8):949--950, 2013.

\bibitem{Barth.Cohen.Faik}
D.~Barth, J.~Cohen, and T.~Faik.
\newblock On the b-continuity property of graphs.
\newblock {\em Discrete Appl. Math.}, 155:1761--1768, 2007.

\bibitem{BM08}
A.~Bondy and U.S.R. Murty.
\newblock {\em Graph Theory}.
\newblock Spring-Verlag Press, 2008.

\bibitem{CJ.11}
S.~Cabello and M.~Jakovac.
\newblock On the b-chromatic number of regular graphs.
\newblock {\em Discrete Appl. Math.}, 159:1303--1310, 2011.

\bibitem{CS.79}
C.A. Christen and S.M. Selkow.
\newblock Some perfect coloring properties of graphs.
\newblock {\em Journal of Combinatorial Theory B}, 27:49--59, 1979.

\bibitem{D.80}
D.P. Dailey.
\newblock Uniqueness of colorability and colorability of planar 4-regular
  graphs are np-complete.
\newblock {\em Discrete Math.}, 30(3):289--293, 1980.

\bibitem{EHLP.03}
P.~Erd\H{o}s, S.T. Hedetniemi, R.~Laskar, and G.C.E. Prins.
\newblock On the equality of the partial grundy and upper ochromatic numbers of
  graphs.
\newblock {\em Discrete Math.}, 272(1):53--64, 2003.

\bibitem{HLS.12}
F.~Havet, C.~Linhares-Sales, and L.~Sampaio.
\newblock b-coloring of tight graphs.
\newblock {\em Discrete Appl. Math.}, 160(18):2709--2715, 2012.

\bibitem{H.81}
I.~Holyer.
\newblock The np-completeness of edge-coloring.
\newblock {\em SIAM J. on Computing}, 10(4):718--720, 1981.

\bibitem{Irving.Manlove.99}
R.W. Irving and D.F. Manlove.
\newblock The b-chromatic number of a graph.
\newblock {\em Discrete Appl. Math.}, 91:127--141, 1999.

\bibitem{K.72}
R.~Karp.
\newblock Reducibility among combinatorial problems.
\newblock {\em Complexity of Computations, Advances in Computer Research},
  85--103, 1972.

\bibitem{KRATOCHVIL.etal.02}
J.~Kratochv\'{\i}l, Zs. Tuza, and M.~Voigt.
\newblock On the b-chromatic number of graphs.
\newblock In {\em WG 2002 - Int. Workshop on Graph-Theoretic Concepts in Comp.
  Sc.}, 2002.

\bibitem{LC.13}
W.-H. Lin and G.J. Chang.
\newblock b-coloring of tight bipartite graphs and the erdos--faber--lov\'asz
  conjecture.
\newblock {\em Discrete Appl. Math.}, 161(7-8):1060--1066, 2013.

\bibitem{L-SS.17}
C.~Linhares-Sales and A.~Silva.
\newblock The b-continuity of graphs with large girth.
\newblock {\em Graphs and Combinatorics}, 33(5):1139--1146, 2017.

\bibitem{LK.07}
V.V. Lozin and M.~Kaminski.
\newblock Coloring edges and vertices of graphs without short or long cycles.
\newblock {\em Contributions do Discrete Mathematics}, 2(1), 2007.

\bibitem{BMZ.09}
F.~Maffray M.~Blidia and Z.~Zemir.
\newblock On b-colorings in regular graphs.
\newblock {\em Discrete Appl. Math.}, 157:1787--1793, 2009.

\bibitem{EK.09}
A.~El Sahili and H.~Kouider.
\newblock About b-colouring of regular graphs.
\newblock {\em Utilitas Math.}, 80:211--215, 2009.

\bibitem{S.etal.05}
Z.~Shi, W.~Goddard, S.T. Hedetniemi, K.~Kennedy, R.~Laskar, and A.~McRae.
\newblock An algorithm for partial {G}rundy number on trees.
\newblock {\em Discrete Math.}, 304:108--116, 2005.

\bibitem{CLS.15}
C.~Lima V.~Campos and A.~Silva.
\newblock Graphs with girth at least~7 have high b-chromatic number.
\newblock {\em European Journal of Combinatorics}, 48:154--164, 2015.

\end{thebibliography}


%
%
%
%
%
 


\end{document}